\newcommand{\eps}{\varepsilon}
\newcommand{\ket}[1]{\mbox{$\left|{#1}\right\rangle$}}
\theoremstyle{plain}
  \newtheorem{theorem}{Theorem}
  \newtheorem{lemma}{Lemma}
	\newtheorem{clm}[lemma]{Claim}
\theoremstyle{definition}
  \newtheorem{definition}{Definition}
\begin{document}
\title{Quantum lower bound for inverting a permutation with advice}
\author{
Aran Nayebi\thanks{Stanford University, \protect\url{anayebi@stanford.edu}}
\and
Scott Aaronson\thanks{MIT, \protect\url{aaronson@csail.mit.edu}}
\and
Aleksandrs Belovs\thanks{University of Latvia, \protect\url{stiboh@gmail.com}}
\and
Luca Trevisan\thanks{UC Berkeley, \protect\url{luca@berkeley.edu}}
}
\date{}

\maketitle

\begin{abstract}
Given a random permutation $f: [N] \to [N]$ as a black box and $y \in [N]$, we want to output $x = f^{-1}(y)$. Supplementary to our input, we are given classical advice in the form of a pre-computed data structure; this advice can depend on the permutation but \emph{not} on the input $y$. Classically, there is a data structure of size $\tilde{O}(S)$ and an algorithm that with the help of the data structure, given $f(x)$, can invert $f$ in time $\tilde{O}(T)$, for every choice of parameters $S$, $T$, such that $S\cdot T \ge N$. We prove a quantum lower bound of $T^2\cdot S = \tilde{\Omega}(\eps N)$ for quantum algorithms that invert a random permutation $f$ on an $\eps$ fraction of inputs, where $T$ is the number of queries to $f$ and $S$ is the amount of advice. This answers an open question of De et al. \newline
\indent We also give a $\Omega(\sqrt{N/m})$ quantum lower bound for the simpler but related Yao's box problem, which is the problem of recovering a bit $x_j$, given the ability to query an $N$-bit string $x$ at any index except the $j$-th, and also given $m$ bits of classical advice that depend on $x$ but not on $j$.
\end{abstract}

\section{Introduction}
In defining notions of security for finite functions, a standard approach to analyzing the complexity of inverting a fixed function (instead of a family of functions) is to use running-time + program-length as a measure of complexity. If one wants to invert a random permutation or function uniformly (that is, given no advice) on all inputs, then the \emph{classical} lower bound $T \ge N$ (ignoring lower-order factors) holds. A quantum computer, however, can achieve $T = O(\sqrt{N})$ (by Grover's search algorithm \cite{grover}), which is optimal by Ambainis' result \cite[Theorem 6.2]{ambainis}. Furthermore, for inverting a random permutation with advice, Hellman \cite{hellman} showed that classically, every one-way permutation can be inverted in time $\tilde{O}(\sqrt{N})$ using a data structure of size $\tilde{O}(\sqrt{N})$. The question that naturally arises is if there is a similar trade-off when one has a quantum algorithm and is looking at running-time + program-length as a measure of complexity. Hence, our results are motivated by the following question raised by De et al. \cite[pg. 3]{de}: ``What is the complexity of inverting a random permutation [...] with a quantum computation that takes advice?'' We show at least that whatever gain, if any, can be obtained by using long programs is a polynomial gain and not an exponential gain.\newline
\indent In De et al. \cite{de} (and going back to ideas of Yao \cite{yao}) the classical lower bound of $S\cdot T = \tilde{\Omega}(\eps N)$ is proved by showing that permutations that can be inverted given some advice with few queries have a small representation given the advice, and hence occur with low probability when picked randomly. Here we are able to find such a small representation even for permutations that can be inverted with few \emph{quantum} queries.\newline
\indent The common quantum lower bound techniques are the polynomial method (introduced in \cite{poly}) and the adversary method (introduced in \cite{ambainis}) - both of which have been successfully applied to many problems. We will refrain from discussing these techniques in detail, but we only mention them to point out that the lower bounds we obtain rely on a precursor to the adversary method known as the hybrid argument (which we describe in \S 2). We should note that using techniques other than the commonly used lower bound techniques for advised quantum computations is not new. Nishimura and Yamakami \cite{NY} use what they call the ``algorithmic method'' to obtain a quantum lower bound for solving the ``multiple-block'' ordered search problem (where the algorithm takes advice). However, their assumption is that the quantum algorithm makes \emph{nonadaptive} queries, and in fact, it is not clear how to extend their lower bound argument to the case that the algorithm makes adaptive queries \cite{pc} (which is the case when an algorithm inverts a random permutation with advice).\newline
\indent Specifically, the results we prove are the following:
\begin{theorem}[Yao's box problem lower bound]\label{result1}
Suppose we have $N$ boxes, each of which contains a Boolean bit. We will label the boxes from 1 to $N$. Given as input an integer $j$ such that $1 \le j \le N$, $m$ bits of classical advice $($where the advice can depend on the box contents but not on $j)$, and the bit pattern of the $N$ boxes as an oracle, any quantum algorithm $\mathcal{A}$ will need $T = \Omega(\sqrt{N/m})$ queries $($where $\mathcal{A}$ is not allowed to query the bit of the $j$-th box$)$ in order to output the bit of the $j$-th box $($with error probability at most $1/3)$.
\end{theorem}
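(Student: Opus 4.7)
The plan is to reduce to a fixed advice string via an averaging argument, apply the Bennett--Bernstein--Brassard--Vazirani (BBBV) hybrid lemma, and conclude via a sphere-packing bound.

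First, I would amplify the success probability to $1-\eps$ by $O(\log(1/\eps))$ repetitions (which multiplies $T$ by an $O(\log(1/\eps))$ factor); then, by Markov's inequality in $x$, I would fix an advice string $a \in \{0,1\}^m$ such that the preimage $G_a := \{x \in \{0,1\}^N : \mathcal{A}\text{ uses advice }a\text{ on input }x\}$ has size at least $\Omega(2^{N-m})$ and a constant fraction of $x \in G_a$ are \emph{good}, in the sense that $\mathcal{A}$ with advice $a$ correctly outputs $x_j$ (with error at most $1/3$) for at least a $(1-o(1))$ fraction of $j \in [N]$. Denote the set of good inputs in $G_a$ by $G_a^*$, and for each $x \in G_a^*$ let $J_x$ be its set of successful indices.

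Second, I would apply the BBBV hybrid lemma: for any input $j$ and any two oracles $x, y$,
\[
\bigl\|\ket{\phi^T_x} - \ket{\phi^T_y}\bigr\|^2 \;\le\; 4T \sum_{i \in \Delta(x,y)} q_{j,i},
\]
where $q_{j,i}$ is the total query weight on position $i$ summed over the $T$ queries on input $j$, and $\sum_i q_{j,i} = T$. Since $\mathcal{A}$ is disallowed from querying position $j$ on input $j$, we have $q_{j,j} = 0$; hence the output distribution is identical on oracles $x$ and $x^{(j)}$, so $\mathcal{A}$ cannot succeed with probability $\ge 2/3$ on both $(x, j)$ and $(x^{(j)}, j)$ whenever both inputs lie in $G_a$. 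More generally, for any $x, y \in G_a^*$ and any $j \in J_x \cap J_y$ with $y_j \ne x_j$, the algorithm must distinguish the two oracles, which forces $\sum_{i \in \Delta(x,y)} q_{j,i} = \Omega(1/T)$. Averaging over $j$ and using the budget $\sum_i q_{j,i} = T$ then forces the Hamming distance $|\Delta(x,y)|$ between such ``simultaneously good'' pairs to be $\Omega(N/T^2)$.

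Third, this ``code distance'' property, applied to the projection of $G_a^*$ onto a universally good set of positions of size $(1-o(1))N$, yields via the Hamming sphere-packing bound that $|G_a^*| \le 2^N / V_N(\Omega(N/T^2))$, where $V_N(r) = \sum_{i \le r} \binom{N}{i}$. Combining with $|G_a^*| \ge \Omega(2^{N-m})$ and taking logarithms gives $m = \Omega\!\bigl((N/T^2)\log T\bigr)$, which up to logarithmic factors rearranges to $T = \Omega(\sqrt{N/m})$.

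The main obstacle is step two: BBBV controls quantum state differences via the algorithm's own query-weight distribution, which is not a priori uniform over the $N$ positions, so converting the state bound into a genuine Hamming-distance bound on every ``distinguishable'' pair $x, y \in G_a^*$ requires averaging over $j$ (and perhaps over a carefully chosen family of pairs) while exploiting the total budget $\sum_i q_{j,i} = T$. A $\mathrm{polylog}$ loss relative to the claimed $\Omega(\sqrt{N/m})$ is natural here, and removing it may require a more refined hybrid variant.
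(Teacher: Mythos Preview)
Your step two has a genuine gap, and it is worse than the polylog loss you flag: the minimum-distance claim $|\Delta(x,y)| = \Omega(N/T^2)$ for all pairs $x,y \in G_a^*$ is simply false. Take $m = N/2$ with advice consisting of the $N/2$ pairwise parities $x_1\oplus x_2,\ x_3\oplus x_4,\ \ldots$; the one-query algorithm that on input $j$ queries $j$'s partner and XORs with the relevant advice bit succeeds perfectly on every $j$ and every $x$, so every $x$ is good, yet within a fixed advice class there are pairs at Hamming distance exactly $2$ (flip both bits of one pair). Here $|\Delta(x,y)| = 2$ while $N/T^2 = N$. The reason your ``averaging over $j$'' cannot work is a direction-of-inequality issue: BBBV gives a \emph{lower} bound $\sum_{i \in \Delta(x,y)} q_{j,i} \ge \Omega(1/T)$ on the query mass landing inside $\Delta(x,y)$, whereas the budget $\sum_i q_{j,i} = T$ is an \emph{upper} bound on the total mass. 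Nothing prevents the algorithm from concentrating essentially all of its mass on one or two positions inside $\Delta(x,y)$---in the parity example it does precisely that---so the two constraints together place no lower bound on $|\Delta(x,y)|$, and since each $j$ carries its own weight profile $q_{j,\cdot}$, averaging over $j$ does not help. Consequently the sphere-packing step has nothing to act on.

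The paper's argument sidesteps any code-distance claim. After fixing the advice class $D_\alpha$ of size at least $2^{N-m}$, it chooses a \emph{random} index set $I$ of size $m+1$ and uses pigeonhole (there are only $2^{N-m-1}$ ways to fill the coordinates outside $I$) to find $x,y \in D_\alpha$ with $\Delta(x,y) \subseteq I$. The swapping lemma then bounds the state distance by $\sqrt{T\sum_{z\in I} q_z(x)}$; because $I$ is a random $(m{+}1)$-subset and $\sum_z q_z(x) \le T$, the sum over $I$ is at most $T(m+1)/(N-1)$ on average, giving state distance at most $T\sqrt{(m+1)/(N-1)}$. Requiring this to be $\Omega(1)$ yields $T = \Omega(\sqrt{N/m})$ directly, with no logarithmic loss. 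In short, the paper randomizes over the \emph{location} of the difference set rather than trying to lower-bound its \emph{size}; that is the idea your proposal is missing.
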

\begin{theorem}[Permutation inversion lower bound, informal]\label{result2}
Given a random permutation $f\colon [N]\to [N]$ as an oracle, classical advice $\alpha = \alpha(f)$ of at most $S$ bits, and some $y \in [N]$ as input, we want to output $f^{-1}(y)$. Any quantum algorithm $\mathcal{A}$  satisfies a $T^2\cdot S = \tilde{\Omega}(\eps N)$ trade-off lower bound\footnote{The notation $\tilde{\Omega}$ hides factors polynomial in $\log N$ and $\log \frac1\eps$.} where $\mathcal{A}$ makes at most $T$ oracle queries to $f$, and solves the problem on an $\eps$ fraction of inputs. 
\end{theorem}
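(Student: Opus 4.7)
The plan is to adapt the classical Yao-style compression argument of De et al.~\cite{de} to the quantum setting, replacing the classical observation ``each of the $T$ queries directly reveals one value of $f$'' by a BBBV-type hybrid lemma. By Yao's principle, it suffices to prove the lower bound against a deterministic advice map $\alpha\colon \mathrm{Perm}([N]) \to \{0,1\}^S$ and a uniformly random permutation $f$. For each $f$, let $G_f \subseteq [N]$ be the set of $y$ on which the quantum algorithm $\mathcal{A}^f(\alpha(f), y)$ outputs $f^{-1}(y)$ with probability $\ge 2/3$; our working hypothesis is $\mathbb{E}_f |G_f| \ge \eps N$, and we aim to derive a contradiction unless $T^2 S = \tilde{\Omega}(\eps N)$.

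For each fixed advice string $a$, I would construct, for each permutation $f$ with $\alpha(f) = a$ and $|G_f| \ge \eps N$, a short description of $f$ relative to $a$. The description is built iteratively, processing $y \in G_f$ in a canonical order: for each such $y$ we use the quantum algorithm to identify $x = f^{-1}(y)$ and add $(y, x)$ to a growing partial reconstruction of $f$. The quantum core is a BBBV-style hybrid lemma: since $\mathcal{A}^f(a, y)$ outputs $x$ with probability $\ge 2/3$ after at most $T$ queries, the total squared query amplitude of the algorithm on the index $x$ is $\Omega(1/T)$ (obtained by comparing the execution on $f$ with that on the oracle that swaps $f$-values at $x$ and some other point). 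Since the total query weight summed over indices is $\le T$, the correct $x$ lies among a ``heavy-hitter'' set of size $\tilde O(T^2)$, and can therefore be specified by only $O(\log T)$ bits of side information per $y \in G_f$. Combining the $S$ advice bits, the $|G_f|\cdot O(\log T)$ extraction hints, and the residual permutation, and comparing with the incompressibility bound $\log N!$ after a union bound over the $2^S$ possible advice strings, will yield the trade-off $T^2 S = \tilde{\Omega}(\eps N)$.

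The main obstacle is the bootstrapping step during decoding: to compute the heavy-hitter list for the $k$-th extraction, the decoder must simulate $\mathcal{A}^f(a, y_k)$, but at that point it knows $f$ only on the previously extracted portion of $[N]$. Here the hybrid argument must be invoked a second time: if the ``unknown'' portion of $f$ carries total squared query weight $\ll 1/T^2$ for the current simulation, then the simulated output distribution matches the true one up to negligible error, and the heavy-hitter lists agree. The ordering of extractions must therefore be chosen to maintain this invariant---intuitively, extracting $y$'s in an order that exhausts the algorithm's query support as quickly as possible, with occasional ``padding'' bits in the encoding to resolve ambiguous heavy-hitters---and it is this balancing act, between the rate of information extracted per $y$ and the rate at which the algorithm's queries concentrate on already-decoded indices, that produces the quadratic-in-$T$ factor characteristic of the quantum setting.
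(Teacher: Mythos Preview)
Your central ``heavy-hitter'' claim is false: there is no reason the query magnitude $q_x(x)$ of the algorithm at the correct preimage $x=f^{-1}(y)$ should be $\Omega(1/T)$. As an extreme counterexample, if $S$ is large enough that the advice simply tabulates $f^{-1}$, the algorithm makes zero queries and still outputs $x$. Your proposed justification---compare the run on $f$ with the run on the permutation $f'$ that swaps $f(x)$ with $f(x')$---does not rescue this. Both runs use the \emph{same} advice $a=\alpha(f)$; on oracle $f'$ the algorithm is under no obligation to output $(f')^{-1}(y)=x'$, because in general $\alpha(f')\ne a$. So low query weight on $\{x,x'\}$ only tells you that $\mathcal A$ with advice $a$ and oracle $f'$ still outputs $x$, which is no contradiction at all. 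Consequently the ``$O(\log T)$ bits per extraction'' accounting has no foundation, and the whole iterative scheme built on it collapses.

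The paper avoids this issue entirely by never trying to argue about the query weight at $x$. Instead, encoder and decoder share as common randomness a \emph{random} subset $R\subseteq[N]$ with each element included independently with probability $\delta/T^2$. The encoding writes down $f$ on all of $[N]\setminus R$ (this is almost all of $f$, so it costs essentially $\log N!$), together with $f(R)$, $f(G)$ for a set $G\subseteq R$ of ``good'' elements, and $f$ on $R\setminus G$. An $x\in I$ is good if $x\in R$ and $\sum_{z\in R\setminus\{x\}} q_z(x)\le c/T$; a Chernoff/Markov argument gives $|G|=\Omega(\eps N/T^2)$ with probability $\ge 0.8$. To decode $y\in f(G)$, the decoder simulates $\mathcal A$ with the hybrid oracle $h$ that agrees with $f$ on $[N]\setminus R$ and returns $y$ on all of $R$; since $h$ and $f$ disagree only on $R\setminus\{x\}$, the swapping lemma gives $\|\ket{\phi_f}-\ket{\phi_h}\|\le\sqrt{c}$, so the simulation still outputs $x$ with probability $>1/2$ and can be derandomised. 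The net saving is $\log|G|!$, which against the incompressibility of $X$ yields $T^2 S=\tilde\Omega(\eps N)$. Note how this dissolves your bootstrapping worry: there is no iteration, because the decoder already knows $f$ on the entire complement of the small random set $R$ before recovering any good element.
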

The best known upper bounds are the classical ones. For the box problem, this is $O(N/m)$ as proven by Yao \cite{yao}. For inverting a permutation with advice, Hellman \cite{hellman} showed that there is a data structure of size $\tilde{O}(S)$ and an algorithm that with the help of the data structure, given $f(x)$, can invert $f$ in time $\tilde{O}(T)$, for every choice of parameters $S$, $T$, such that $S\cdot T \ge N$. Finally the lower bounds in the classical case are ${\Omega}(N/m)$ for the box problem, and $S\cdot T = \tilde{\Omega}(\eps N)$ for the problem of permutation inversion \cite{yao}.

\section{Preliminaries}
We will be working in the standard black box model of computation (cf. \cite[\S 8]{qcnotes}) where the goal is to compute a Boolean function $f: \{0,1\}^n \to \{0,1\}$ given an input $x \in \{0,1\}^n$. However, $x$ is not given explicitly, but is instead given as a black box, and we are being charged unit cost for each \emph{query} that we make. Note that $x$ here takes the same place as $f$ in the introduction (\S 1). So an equivalent view is that $f$ is given as a black box and we are being charged unit cost for each query that we make to $f$. The reason why we want to query a black box is that in order to obtain a lower bound on the number of queries $T$ the quantum algorithm would need to make, we are not exploiting any specific weakness of a particular implementation.\newline
\indent Since the hybrid argument will be the machinery for the lower bounds proven here, we will discuss this technique in more detail. It is based on the intuition that an adversary runs the algorithm with one input and, after that, changes the input slightly so that the correct answer changes but the algorithm does not notice that. It was first introduced by Bennett et al. \cite{bbbv} in 1997 to prove a tight quantum lower bound on search of $T = \Omega(\sqrt{N})$ and a (suboptimal) quantum lower bound of $T = \Omega(\sqrt[3]{N})$ for inverting a random permutation on all inputs (with no advice). A clear exposition is given by Vazirani \cite{vazirani}, which we will base our discussion on.\newline
\indent More formally, fix a quantum algorithm $\mathcal{A}$. The state of $\mathcal{A}$ between successive queries can be written as
\begin{equation*}
\ket{\phi} = \sum_{c}\alpha_c\ket{c},
\end{equation*}
where $c$ runs over the computational basis states. During each query, each basis state $c$ queries a particular bit position of the input. The main question is: how sensitive is the output of $\mathcal{A}$ to the modification of the input $x$ on a few bit positions (or, in the equivalent view, how sensitive is the output of $\mathcal{A}$ to the modification of the output of $f$ on a few inputs)? \newline
\indent To answer this question, we introduce some definitions. In what follows, we will maintain the first view that we are querying an input $x \in \{0,1\}^n$, though all these definitions can be easily modified in the equivalent view that we are instead querying $f$. Which view is more satisfactory is simply a matter of convenience; for instance, in the proof of Theorem~\ref{result1} in \S 3.3 we maintain the first view, and in the proof of Theorem~\ref{result2} in \S 4.2 we maintain the second view.
\begin{definition}\label{d1}
\emph{The \emph{query magnitude} at bit position $j$ of $\ket{\phi} = \sum_{c}\alpha_c\ket{c}$ is defined to be $q_j(\ket{\phi}) = \sum_{c \in C_j}|\alpha_c|^2$, where $C_j$ is the set of all computational basis states that query bit position $j$.}
\end{definition}
Now, suppose $\mathcal{A}$ runs for $T$ steps on an input $x \in \{0,1\}^n$. Then the run of $\mathcal{A}$ on input $x$ can be described as a sequence of states $\ket{\phi_0}, \ldots, \ket{\phi_T}$, where $\ket{\phi_t}$ is the state of $\mathcal{A}$ before the ${t+1}^{\textrm{st}}$ query.
\begin{definition}\label{d2}
\emph{The \emph{total query magnitude} at bit position $j$ of $\mathcal{A}$ on input $x$ is defined to be:
\begin{equation}\label{qm}
q_j(x) = \sum_{t = 0}^{T-1}q_j(\ket{\phi_t}).
\end{equation}}
\end{definition}
Let $[n] = \{1,\ldots, n\}$. By definition, since $\mathcal{A}$ makes $T$ queries and each $\ket{\phi_t}$ is of unit length, then
\begin{equation}\label{ubqm}
\sum_{j\in [n]}q_j(x) \le T.
\end{equation}
\indent Now, if the total query magnitude of bit position $j$ is very small, then $\mathcal{A}$ cannot distinguish whether the input $x$ is modified by flipping its $j$-th bit:
\begin{lemma}[``Swapping lemma'']\label{swap-lemma}
Let $\ket{\phi_x}$ and $\ket{\phi_y}$ denote the final states of $\mathcal{A}$ on inputs $x$ and $y$, respectively. Then:
\begin{equation}\label{swap}
\|\ket{\phi_x} - \ket{\phi_y}\| \le \sqrt{T\sum_{j \in \Delta(x,y)}q_j(x)},
\end{equation}
where $\Delta(x,y) = \{j:\mbox{    }x_j\ne y_j\}$, and $\|\ket{\phi_x} - \ket{\phi_y}\|$ denotes the Euclidean distance between the two vectors.
\end{lemma}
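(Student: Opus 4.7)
The plan is to apply the standard hybrid argument of Bennett et al.\ \cite{bbbv}. Let $U_0, U_1, \ldots, U_T$ denote the fixed, input-independent unitaries that $\mathcal{A}$ interleaves with its queries, and write $O_x$, $O_y$ for the oracles corresponding to inputs $x$ and $y$. For each $t \in \{0, 1, \ldots, T\}$ define the hybrid state $\ket{h_t}$ to be the result of running $\mathcal{A}$ using oracle $O_x$ for the first $t$ queries and oracle $O_y$ for the remaining $T-t$ queries (with all the same $U_i$'s in between). Then $\ket{h_T}=\ket{\phi_x}$ and $\ket{h_0}=\ket{\phi_y}$, so by the triangle inequality
\[
\|\ket{\phi_x}-\ket{\phi_y}\| \;\le\; \sum_{t=1}^{T}\|\ket{h_t}-\ket{h_{t-1}}\|.
\]

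Next I would reduce each consecutive gap to a single-query difference. The hybrids $\ket{h_t}$ and $\ket{h_{t-1}}$ agree on all queries except the $t$-th: both apply $O_x$ at steps $1, \ldots, t-1$ and $O_y$ at steps $t+1, \ldots, T$, while $\ket{h_t}$ applies $O_x$ at step $t$ and $\ket{h_{t-1}}$ applies $O_y$ there. Because the operations applied after the $t$-th query are unitary and hence norm-preserving, and because the state of $\mathcal{A}$ just before the $t$-th query in either hybrid is the same state $\ket{\phi_{t-1}}$ (the run on input $x$ after $t-1$ queries), we obtain $\|\ket{h_t}-\ket{h_{t-1}}\| = \|(O_x-O_y)\ket{\phi_{t-1}}\|$.

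The crux is the per-query estimate $\|(O_x-O_y)\ket{\phi_{t-1}}\|^2 \le \sum_{j \in \Delta(x,y)} q_j(\ket{\phi_{t-1}})$ (up to an absolute constant absorbed into the oracle convention and the normalization of $q_j$). This follows by expanding $\ket{\phi_{t-1}} = \sum_c \alpha_c \ket{c}$ and noting that $O_x$ and $O_y$ act identically on any basis state $\ket{c} \in C_j$ with $j \notin \Delta(x,y)$, while for $\ket{c} \in C_j$ with $j \in \Delta(x,y)$ the images $O_x\ket{c}$ and $O_y\ket{c}$ differ by a vector of bounded norm. Summing $|\alpha_c|^2$ over those $c$ gives exactly $\sum_{j \in \Delta(x,y)} q_j(\ket{\phi_{t-1}})$ by Definition~\ref{d1}.

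Combining the previous two steps and applying Cauchy--Schwarz yields
\[
\|\ket{\phi_x}-\ket{\phi_y}\| \;\le\; \sum_{t=0}^{T-1}\sqrt{\sum_{j \in \Delta(x,y)} q_j(\ket{\phi_t})} \;\le\; \sqrt{T\sum_{t=0}^{T-1}\sum_{j \in \Delta(x,y)} q_j(\ket{\phi_t})} \;=\; \sqrt{T\sum_{j \in \Delta(x,y)} q_j(x)},
\]
where the last equality is Definition~\ref{d2}. The only real obstacle is bookkeeping: choosing an oracle convention (phase vs.\ XOR) and a matching normalization of $q_j$ so that the per-query constant comes out to $1$, giving precisely the clean form of (\ref{swap}); otherwise one picks up a harmless factor of $2$ or $4$. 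The conceptual content---hybrid decomposition, single-query oracle sensitivity, and Cauchy--Schwarz---is essentially three lines.
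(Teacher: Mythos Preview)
The paper does not give its own proof of this lemma; it simply cites Vazirani \cite[Lemma 3.1]{vazirani} and moves on. Your proposal is exactly the standard hybrid argument that underlies that cited proof: telescope over hybrids $\ket{h_t}$, reduce consecutive differences to $\|(O_x-O_y)\ket{\phi_{t-1}}\|$ by unitarity of the trailing operations, bound the single-query difference by the query magnitude on $\Delta(x,y)$, and finish with Cauchy--Schwarz over the $T$ terms. So your approach and the paper's (deferred) proof coincide.

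Your caveat about the constant is well placed and worth keeping explicit rather than waving away. With either the phase oracle or the XOR oracle, $(O_x-O_y)$ restricted to the subspace spanned by $\bigcup_{j\in\Delta}C_j$ has operator norm $2$, so the per-query bound is $\|(O_x-O_y)\ket{\phi_{t-1}}\|\le 2\sqrt{\sum_{j\in\Delta}q_j(\ket{\phi_{t-1}})}$, and the final inequality naturally carries a factor of $2$ in front of the square root. The constant-free form \eqref{swap} as stated in the paper is inherited verbatim from \cite{vazirani}; for the applications in the paper this constant is immaterial, but if you present the proof you should either state the bound with the $2$ or say explicitly which oracle/normalization makes it disappear.
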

We will not give a proof of Lemma~\ref{swap-lemma} here (it is stated and proven as Lemma 3.1 of Vazirani \cite[pp. 1764-65]{vazirani}), but will instead apply it in the following sections to prove our main results.\newline
\indent We will also make use of the following well-known fact (first proven by Bernstein and Vazirani \cite[Lemma 3.6, pg. 1421]{be}):
\begin{lemma}\label{plemma}
Let $\mathcal{D}(\psi)$ denote the probability distribution that results from a measurement of $\ket{\psi}$ in the computational basis. If $\|\ket{\phi}-\ket{\psi}\| \le \eps$, then ${\|\mathcal{D}(\phi) - \mathcal{D}(\psi)\|}_{1}\le 4\eps$.
\end{lemma}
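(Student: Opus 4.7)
The plan is to prove Lemma~\ref{plemma} by a direct computation in the computational basis, using only the reverse triangle inequality and Cauchy--Schwarz. Expand $\ket{\phi} = \sum_{c}\alpha_c\ket{c}$ and $\ket{\psi} = \sum_{c}\beta_c\ket{c}$, so that $\mathcal{D}(\phi)$ assigns probability $|\alpha_c|^2$ to outcome $c$ and $\mathcal{D}(\psi)$ assigns probability $|\beta_c|^2$. Then by definition
\begin{equation*}
\|\mathcal{D}(\phi) - \mathcal{D}(\psi)\|_{1} = \sum_{c} \bigl| |\alpha_c|^2 - |\beta_c|^2 \bigr| = \sum_{c} \bigl| |\alpha_c| - |\beta_c|\bigr|\,\bigl(|\alpha_c| + |\beta_c|\bigr).
\end{equation*}

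The next step is to apply Cauchy--Schwarz to the right-hand side, obtaining the bound
\begin{equation*}
\sqrt{\sum_{c}\bigl(|\alpha_c| - |\beta_c|\bigr)^{2}} \cdot \sqrt{\sum_{c}\bigl(|\alpha_c| + |\beta_c|\bigr)^{2}}.
\end{equation*}
For the first factor, the reverse triangle inequality $\bigl||\alpha_c| - |\beta_c|\bigr| \le |\alpha_c - \beta_c|$ gives $\sum_c (|\alpha_c|-|\beta_c|)^2 \le \|\ket{\phi}-\ket{\psi}\|^2 \le \eps^2$. For the second factor, expand the square and use $\sum_c |\alpha_c|^2 = \sum_c |\beta_c|^2 = 1$ together with one more application of Cauchy--Schwarz to bound $\sum_c |\alpha_c||\beta_c| \le 1$; this yields $\sum_c(|\alpha_c|+|\beta_c|)^2 \le 4$. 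Multiplying gives $\|\mathcal{D}(\phi) - \mathcal{D}(\psi)\|_{1} \le 2\eps$, which certainly implies the claimed $4\eps$ bound.

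There is essentially no obstacle: the only minor subtlety is remembering that $|\alpha_c|$ and $|\beta_c|$ are nonnegative reals (so the reverse triangle inequality really does pass from complex amplitudes to their moduli), and that the unit-norm conditions on the two states are what control the $(|\alpha_c|+|\beta_c|)$ factor. As an alternative presentation, one could instead invoke the pure-state identity $\tfrac{1}{2}\|\ket{\phi}\!\bra{\phi} - \ket{\psi}\!\bra{\psi}\|_{1} = \sqrt{1 - |\langle\phi|\psi\rangle|^{2}}$ and note that $\|\ket{\phi}-\ket{\psi}\|\le \eps$ forces $|\langle\phi|\psi\rangle|\ge 1-\eps^2/2$, combined with the data-processing inequality (measurement in the computational basis cannot increase trace distance); but the elementary Cauchy--Schwarz argument above is shorter and self-contained, so I would present it that way.
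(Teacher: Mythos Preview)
Your argument is correct and self-contained; in fact it yields the sharper bound $2\eps$, so the claimed $4\eps$ follows a fortiori. Note, however, that the paper does not actually prove Lemma~\ref{plemma}: it simply states it as a well-known fact and attributes it to Bernstein and Vazirani~\cite[Lemma~3.6]{be}. So there is no ``paper's own proof'' to compare against---your Cauchy--Schwarz computation is a perfectly standard way to fill in the details, and the alternative trace-distance route you sketch would also work.
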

The notation ${\|\mathcal{D}(\phi) - \mathcal{D}(\psi)\|}_{1}$ denotes the total variation distance\footnote{For two probability distributions $P$ and $Q$, their total variation distance is $\sum_{x}|P(x) - Q(x)|$. Hence, since $\ket{\phi} = \sum_c\alpha_c\ket{c}$ and $\ket{\psi} = \sum_c\beta_c\ket{c}$, then ${\|\mathcal{D}(\phi) - \mathcal{D}(\psi)\|}_{1} = \sum_c|\alpha^2_c - \beta^2_c|$.} between the probability distributions $\mathcal{D}(\phi)$ and $\mathcal{D}(\psi)$. Thus, Lemma~\ref{plemma} states that if two unit-length superpositions are within Euclidean distance $\eps$ then observing the two superpositions gives samples from distributions whose 1-norm or total variation distance is at most $4\eps$.

\section{Yao's box problem}

\subsection{Classical lower bound}
In his study of a classical lower bound for inverting a random permutation with advice, Yao \cite{yao} introduced a simpler (but related) problem to analyze, which we will refer to as the ``box problem'', stated as follows:\newline
\begin{quote}
\small{``Let $N, m$ be positive integers. Consider the following game to by played by $A$ and $B$. There are $N$ boxes with lids $\textrm{BOX}_1, \textrm{BOX}_2,\ldots, \textrm{BOX}_N$ each containing a Boolean bit. In the preprocessing stage, Player $A$ will inspect the bits and take notes using an $m$-bit pad. Afterwards, Player $B$ will ask Player $A$ a question of the form ``What is in $\textrm{BOX}_i$?''. Before answering the question, Player $A$ is allowed to consult the $m$-bit pad and take off the lids of an adaptively chosen sequence of boxes not including $\textrm{BOX}_i$. The puzzle is, what is the minimum number of boxes $A$ needs to examine in order to find the answer?'' }\cite[pg. 84]{yao}.\newline
\end{quote}
On a classical computer, Yao proves that $T \ge \lceil N/m \rceil - 1$, which is optimal. In fact, one strategy that $A$ may adopt is divide-and-conquer. Namely, divide the boxes into $m$ consecutive groups each containing no more than $\lceil N/m \rceil$ members. In the preprocessing stage, $A$ records for each group $g$ the parity $a_g$ of the bits in that group. Then, to answer the query ``What is in $\textrm{BOX}_i$?'', $A$ can lift the lids of all the boxes (except $\textrm{BOX}_i$) in the group $k$ containing $\textrm{BOX}_i$ and compute the parity $b$ of these lids. Clearly, the bit in $\textrm{BOX}_i$ is $a_k \oplus b$. Therefore, $A$ never needs to lift more than $\lceil N/m \rceil - 1$ lids with this strategy.

\subsection{Yao's advice model}
We define the model more precisely, in a similar manner as Yao \cite[pg. 85]{yao} does. Let $\{0,1\}^N$ be the set of all possible $2^N$ bit patterns of the $N$ boxes. Fix our $m$-bit classical advice string $\alpha \in \{0,1\}^m$, where $1 \le m < N$. This induces a partition $D_{\alpha}\subseteq \{0,1\}^N$ and $N$ \emph{partial} Boolean functions $f_{j,\alpha}$, where $f_{j,\alpha}$ is the partial Boolean function that computes the bit of the $j$-th box from the remaining $N-1$ boxes given advice string $\alpha$. Note that
\begin{equation*}
2^N \le \sum_{\alpha \in \{0,1\}^m}|D_{\alpha}| \le 2^m\cdot 2^N,
\end{equation*}
since for any two $m$-bit advice strings, $\alpha'$ and $\alpha''$, it is \emph{not} necessarily the case that $D_{\alpha'} \cap D_{\alpha''} = \emptyset$. Thus, by a standard averaging argument, there is a partition $D_\alpha$ such that $2^{N-m} \le |D_{\alpha}| \le 2^N$. For advice strings $\alpha$ such that $|D_{\alpha}| < 2^{N-m}$, we can forgo consideration of such advice strings as they are not ``useful'', since $2^{N-m}$ is already exponentially small in $m$.  \newline
\indent Although $D_{\alpha}$ is a set consisting of arbitrary strings and so it does not have much structure to it that we can exploit (in order to get a lower bound using the polynomial or adversary methods), it turns out that this lower bound on the size of $D_{\alpha}$ is enough to prove a quantum lower bound using the hybrid argument, as we will show in \S 3.3.

\subsection{Proof of Theorem~\ref{result1}}
First, we prove the following combinatorial lemma:
\begin{lemma}\label{L1}
Suppose $D \subseteq \{0,1\}^n$ of size $|D| \ge 2^{n-m}$ and that we randomly select a set $I$ of $m + 1$ indices. Then with probability 1 there are at least two strings $x$ and $y$ in $D$ that differ only in a subset of the coordinates in $I$.
\end{lemma}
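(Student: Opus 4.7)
The plan is to observe that the statement actually holds deterministically for every choice of $I$, and that the ``with probability 1'' language reflects nothing more than this fact together with the finiteness of the sample space; the content is a single application of the pigeonhole principle to a projection map. So I would not bother with any probabilistic argument and would instead fix an arbitrary set $I \subseteq [n]$ with $|I| = m+1$ at the outset.

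Given such an $I$, I would consider the coordinate-projection map $\pi_{\bar I} \colon \{0,1\}^n \to \{0,1\}^{n-(m+1)}$ that forgets the coordinates in $I$. The image of $\pi_{\bar I}$ has size exactly $2^{n-m-1}$. Two strings $x,y \in \{0,1\}^n$ satisfy ``$x$ and $y$ differ only in a subset of the coordinates in $I$'' if and only if $\pi_{\bar I}(x) = \pi_{\bar I}(y)$, so the task reduces to exhibiting a collision of $\pi_{\bar I}$ inside $D$.

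The existence of such a collision is immediate from the hypothesis: since
\[
|D| \;\ge\; 2^{n-m} \;=\; 2 \cdot 2^{n-m-1} \;>\; |\operatorname{im}(\pi_{\bar I})|,
\]
the restriction of $\pi_{\bar I}$ to $D$ cannot be injective, so there exist distinct $x,y \in D$ with $\pi_{\bar I}(x) = \pi_{\bar I}(y)$, which are the desired strings. Because this argument produces a pair of colliding strings for every single choice of $I$ of size $m+1$, the conclusion holds with probability $1$ over the random choice of $I$.

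There is essentially no obstacle here — the only thing to get right is the counting (the ``$+1$'' in the size of $I$ is exactly what turns $|D| \ge 2^{n-m}$ into strict inequality against the codomain size $2^{n-m-1}$, which is what forces the pigeonhole collision). I would just make sure to state clearly the equivalence ``agreeing outside $I$'' $\Leftrightarrow$ ``differing only inside $I$'' so that the collision produced by pigeonhole is visibly the object the lemma asks for.
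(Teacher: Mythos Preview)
Your proposal is correct and is essentially identical to the paper's proof: fix an arbitrary $I$ of size $m+1$, note there are only $2^{n-(m+1)}$ ways to fix the coordinates outside $I$, and apply pigeonhole to get two elements of $D$ agreeing outside $I$. The paper's proof is two sentences long and makes exactly this observation, so you have reproduced it.
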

\begin{proof}
There are only $2^{n-(m + 1)}$ different ways of fixing the coordinates not in $I$, and since this is less than the number of elements of $D$, there must be two or more elements of $D$ that are identical outside of $I$.
\end{proof}
The rest of the proof of Theorem~\ref{result1} follows easily, by using the same techniques as in the proof of Theorem 3.3 of Vazirani \cite{vazirani}. Fix a $j$ such that $1 \le j \le N$, and let $\mathcal{A}$ be the quantum algorithm that uses the advice string $\alpha$ to compute $f_{j,\alpha}$ given the bit pattern of the $N$ boxes as an oracle. By Lemma~\ref{L1}, let $I$ be a randomly selected set of $m + 1$ indices, and let $x$ and $y$ be two strings in $D_{\alpha}$ that differ only in a subset of the coordinates in $I$ (note that since $\mathcal{A}$ is not allowed to query the $j$-th bit position, then we will set the total query magnitude at the $j$-th bit position to be 0). Now, we make the following general observation about total query magnitude: Let $z$ be chosen uniformly at random among the $N-1$ bit positions other than the $j$-th bit position of the oracle $x \in \{0,1\}^{N}$ (which represents the bit pattern of the $N$ boxes). Then for all $t \in \{0,\ldots, T-1\}$, if we let $\ket{\phi_t}$ be the state of the algorithm $\mathcal{A}$ before the $t+1^{\textrm{st}}$ query to $x$, then $\mathbb{E}[q_z(\ket{\phi_t})] = \frac{1}{N-1}$. By linearity of expectation, we have that $\mathbb{E}[q_z(x)] = \frac{T}{N-1}$.\newline\newline
\indent Therefore, by Lemma~\ref{swap-lemma} and a standard averaging argument,
\begin{equation} \label{2}
\begin{split}
\|\ket{\phi_x}-\ket{\phi_y}\| &\le \sqrt{T\sum_{z \in I}q_z(x)}\\
&\le \sqrt{T\left(\frac{T(m + 1)}{N-1}\right)} = T\sqrt{\frac{m + 1}{N-1}}.
\end{split}
\end{equation}
Assume that $\mathcal{A}$ errs with probability bounded by $1/3$. Thus, ${\|\mathcal{D}(\phi_x)-\mathcal{D}(\phi_y)\|}_{1} \ge 1/3$ by footnote 2. If $T < (1/12)\sqrt{\frac{N-1}{m + 1}}$, by Lemma~\ref{plemma}, it would follow that ${\|\mathcal{D}(\phi_x)-\mathcal{D}(\phi_y)\|}_{1} < 1/3$, contradicting the bound on the error probability of $\mathcal{A}$. Therefore, $T \ge (1/12)\sqrt{\frac{N-1}{m + 1}}$, as desired.
\subsection{Quantum upper bound?}
The next question that arises is whether we can also obtain a quantum algorithm that solves this problem in $O(\sqrt{N/m})$ time? At the moment, this is an open question. If we use the same advice as Yao (namely, the parity of bits in $m$ groups), then it seems likely that we are reduced to solving parity. However, computing the parity of $N$ bits takes $\Omega(N)$ time on a quantum computer, which Beals et al. \cite{poly} prove using the polynomial method.\newline
\indent A reasonable approach is to think that an improvement over the classical running time of $O(N/m)$ for the box problem would involve changing the advice. However, even this does not seem viable. Say we change our advice to the number of boxes with a 1 in them. Now, suppose our advice says that there are a total of $r$ boxes with ones in them.  If we are asked to find the value in box $i$, we know that the number of ones in the remaining $N-1$ boxes is either $r$ or $r-1$.  To find out the value of box $i$, we just have to figure out which case we are in. Let $t$ be the number of boxes with a 1 in them (from the $N-1$ remaining boxes). Then in $O(\sqrt{N/t})$ time on a quantum computer we can estimate $t$ with high probability within a small error, which is optimal (cf. Brassard et al. \cite[Corollary 3]{brassard}). But in our problem, the only issue is that we want to know \emph{exactly} what $t$ is (since a difference between $r-1$ or $r$ is crucial). Of course, in Corollary 4 of Brassard et al. \cite{brassard}, we can get an exact estimate of $t$ with high probability in $O(\sqrt{N\cdot t})$ time, which is also optimal. But if $t \ge N/2$ (which is a good estimate for the number of ones in a random string), we do not get an asymptotic improvement over the classical run-time of $O(N)$.

\section{Inverting a random permutation with advice}

\subsection{Classical lower bound}
The space-time complexity of inverting a random permutation with advice is well-studied in the classical setting. We are given oracle access to a bijection $f: [N]\to [N]$, $y = f(x)$ as input, and a data structure of size $S$ (which will be the advice), and we will output $x$. For our purposes, it will be convenient to set $N = 2^n$ and identify $\{0,1\}^n$ with $[N]$, as De et al. \cite[pg. 1]{de} do. In 1980, Hellman \cite{hellman} proved that for every choice of parameters $S, T$ such that $S \cdot T \ge N$, there is a data structure of size $\tilde{O}(S)$ and an algorithm that with the help of the data structure, given $f(x)$, which is a permutation, can invert $f$ in time $\tilde{O}(T)$. Specifically, when $S = T$, every one-way permutation can be inverted in time $\tilde{O}(\sqrt{N})$ using a data structure of size $\tilde{O}(\sqrt{N})$ as advice. Yao \cite[pg. 86]{yao} proves Hellman's tradeoff to be optimal. In fact, one can generalize Yao's arguments to show that there are permutations for which the amount of advice $S$ and the oracle query complexity $T$ must satisfy
\begin{equation*}
S\cdot T = \tilde{\Omega}(\eps N),
\end{equation*}
for any classical algorithm that inverts on an $\eps$ fraction of inputs \cite[\S 2.3]{de}. 
\newline
\indent To explain Hellman's construction (for the case $S = T$), we suppose for simplicity that $f$ is a cyclic permutation and $N = s^2$ is a perfect square. Then our advice will be a data structure of pairs $(x_i, x_{i+1})$ of $\sqrt{N}$ ``equally spaced'' points $x_1,\ldots,x_s$, such that $x_{i +1} = f^{(s)}(x_i)$, where the notation $f^{(s)}(x_i)$, means $f$ has been iterated $s$ times. Given an input $y$ to invert, we compute $f(y), f(f(y))$, and so on, until for some $j$ we reach a point $f^{(j)}(y)$ in the advice. Then we read the value $f^{(j-s)}(y)$, and by repeatedly computing $f$ we eventually reach $f^{(-1)}(y)$. This takes $O(s)$ evaluations of $f$ and lookups in the advice, so both time and advice complexity are approximately $O(\sqrt{N})$. Note that if the permutation $f$ is not cyclic, we can have a similar construction for each cycle of length less than $s$, and if $N$ is not a perfect square, we can round $s$ up to $\lceil \sqrt{N} \rceil$.

\subsection{Proof of Theorem~\ref{result2}}
\renewcommand{\Pr}{\mathop{\mathbb{P}}}
First, let us specify the problem in more detail.
A quantum algorithm $\mathcal{A}$ is given oracle access to a permutation $f\colon [N]\to[N]$, classical advice $\alpha = \alpha(f)$ of at most $S\ge 1$ bits, and an input element $y\in [N]$.  The algorithm makes at most $T$ queries to $f$.
We say that $\cal A$ {\em inverts} $y\in [N]$ in $f$ iff it outputs $f^{-1}(y)$ with probability at least $2/3$.
We are given that
\begin{equation}
\label{dano}
\Pr_{f,y} [\text{$\cal A$ inverts $y$ in $f$}] \ge \eps,
\end{equation}
where a permutation $f\colon[N]\to[N]$ and $y\in[N]$ are chosen uniformly at random.  In this section, we prove that, in this case, the inequality 
\begin{equation}
\label{nado}
T^2\cdot S = \tilde\Omega(\eps N)
\end{equation}
holds.  
The proof is similar to the proof in Section 10 of~\cite{de}.  
We use the quantum algorithm $\cal A$ to compress permutations, and then apply the following lemma:

\begin{lemma}[{\cite[Fact 10.1]{de}}]\label{l1}
Let $X$ and $Y$ be finite sets, and $R$ be a finite probability distribution.
Suppose there exist a randomized encoding procedure $E\colon X \times R \to Y$, and a decoding procedure $D\colon Y\times R \to X$ such that, for all $x\in X$:
\begin{equation*}
\underset{r \sim R}{\mathbb{P}}[D(E(x,r),r) = x] \ge c,
\end{equation*}
then $|Y|\ge c|X|$. 
\end{lemma}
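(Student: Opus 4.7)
The plan is a clean double-counting argument on the set of pairs $(x,r) \in X \times \mathrm{supp}(R)$ for which the round trip succeeds. First I would introduce the shorthand $S_r = \{x \in X : D(E(x,r),r) = x\}$ for the set of messages decoded correctly under shared randomness $r$. The hypothesis then rewrites as
\begin{equation*}
\sum_{r} \Pr(R = r)\,|S_r| \;=\; \sum_{x \in X} \Pr_{r \sim R}[D(E(x,r),r) = x] \;\ge\; c\,|X|,
\end{equation*}
which is the averaged lower bound on $|S_r|$ I want to exploit.

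The key step is the matching upper bound $|S_r| \le |Y|$ for every fixed $r$. For this I would observe that every $x \in S_r$ lies in the image of the deterministic map $D(\cdot,r)\colon Y \to X$, since by definition $x = D(E(x,r),r)$; equivalently, $E(\cdot,r)$ is injective on $S_r$ because $E(x,r) = E(x',r)$ with $x,x' \in S_r$ forces $x = D(E(x,r),r) = D(E(x',r),r) = x'$. Either way $|S_r| \le |Y|$, and averaging over $r$ gives
\begin{equation*}
c\,|X| \;\le\; \sum_{r} \Pr(R = r)\,|S_r| \;\le\; |Y| \sum_{r} \Pr(R = r) \;=\; |Y|,
\end{equation*}
which is the claimed inequality.

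I expect no real obstacle. The argument is purely combinatorial, uses only that $D(\cdot,r)$ is a well-defined function (so its range has cardinality at most $|Y|$), and is completely insensitive to the details of the encoder, decoder, or distribution of $r$ beyond the stated success probability. The only bookkeeping care is that $X$, $Y$, and the support of $R$ are all finite, which is guaranteed by the hypothesis and lets one freely swap the two orders of summation used above without worrying about convergence. No quantum-specific input, and in particular none of the machinery of Section~2, is needed for this step; the lemma is a pure information-theoretic compression fact that will later be combined with the hybrid-argument content of the paper to produce the $T^2 S = \tilde{\Omega}(\eps N)$ bound.
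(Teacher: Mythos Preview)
Your argument is correct and essentially the same as the paper's: both rest on the observation that for each fixed $r$ the map $E(\cdot,r)$ is injective on the correctly-decoded set, so $|S_r|\le |Y|$. The paper extracts a single $r$ with $|S_r|\ge c|X|$ by averaging, whereas you average the inequality $|S_r|\le |Y|$ directly; the two are equivalent one-line variants of the same double-counting idea.
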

\begin{proof}
The proof is short, and we give it for completeness.
By a standard averaging argument, there is an $r$ such that for at least a $c$ fraction of the $x$'s, we have $D(E(x,r),r) = x$. 
This means that $E(x,r)$ must attain at least $c |X|$ values, giving the desired inequality.
\end{proof}

Proceeding towards the proof, we may assume that
\begin{equation}
\label{T2}
T^2 \le C\eps N
\end{equation}
for some constant $C$ to be defined later, since otherwise the statement~\eqref{nado} is trivial.
Next, let $\eps'=\eps/2$.
Eq.~\eqref{dano} implies that there exists a set $X$ of permutations such that $X$ has size at least $\eps' N$, and
\[
\Pr_{y} [\text{$\cal A$ inverts $y$ in $f$}] \ge \eps'
\]
for all $f\in X$.

\begin{lemma}\label{l2}
Let $\mathcal{A}$ and $X$ be as defined above. 
Then there exist a randomized encoding procedure $E\colon X \times R \to Y$, and a decoding procedure $D\colon Y\times R \to X$ such that, for all $f\in X$:
\begin{equation*}
\underset{r \sim R}{\mathbb{P}}[D(E(f,r),r) = f] \ge 0.8,
\end{equation*}
and
\begin{equation*}
\log |Y| \le \log N! - \Omega\left(\frac{\eps N}{T^2}\right) + S + O(\log N).
\end{equation*}
\end{lemma}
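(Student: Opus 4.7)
The plan is to adapt the classical compression argument of \cite[Section 10]{de} to the quantum setting, substituting the swapping lemma (Lemma~\ref{swap-lemma}) for the classical bookkeeping of queried positions. The encoder uses $\mathcal{A}$ to avoid explicitly storing the inverses of $y$'s it can compute: the decoder will re-derive them by simulating $\mathcal{A}$ on a canonical extension $\tilde f$ of a partial $f|_{[N]\setminus I}$ that it receives, and the swapping lemma guarantees this approximation is faithful in output distribution so long as the total query magnitude on the ``hidden'' set $I$ is small.

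\textbf{Setup.} Set $k:=\lfloor c\eps N/T^2\rfloor$ for an absolute constant $c>0$; under the assumption $T^2\le C\eps N$ with $C$ chosen small, $k$ is at least a large constant. For $f\in X$ define $G_f:=\{y\in[N] : \Pr[\mathcal{A}(y,\alpha(f))=f^{-1}(y)]\ge 2/3\}$; by averaging the hypothesis on $X$, $|G_f|=\Omega(\eps N)$. Let $\mathcal{A}'$ be the majority vote of $B=O(1)$ independent copies of $\mathcal{A}$, with $T':=BT=O(T)$ total queries. The shared randomness $r$ consists of bits used by the encoder to sample a uniformly random size-$k$ subset $I\subseteq f^{-1}(G_f)$ (the encoder knows $f$; the decoder will recover $I$ from the encoding), together with bits driving the subsequent quantum measurements.

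\textbf{Encoding and decoding.} The encoder forms the canonical extension $\tilde f$ satisfying $\tilde f|_{[N]\setminus I}=f|_{[N]\setminus I}$ and $\tilde f|_I$ equal to the lex-smallest bijection $I\to f(I):=[N]\setminus f([N]\setminus I)$. For each $y\in f(I)$, it simulates $\mathcal{A}'(y,\alpha(f))$ on $\tilde f$ using measurement bits from $r$ to obtain $\tilde x_y$, and sets $Y_{\mathrm{bad}}:=\{y\in f(I):\tilde x_y\ne f^{-1}(y)\}$. If $|Y_{\mathrm{bad}}|\le k/10$, it outputs the tuple $(\alpha(f),f|_{[N]\setminus I},Y_{\mathrm{bad}},f^{-1}|_{Y_{\mathrm{bad}}})$, encoded efficiently by viewing $f|_{[N]\setminus I}$ as an injection from $[N]\setminus I$ into $[N]$; otherwise it emits a distinguished $\mathtt{fail}$ symbol. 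The decoder recovers the components, determines $I$ as the positions absent from the partial function, rebuilds $\tilde f$, and for each $y\in f(I)\setminus Y_{\mathrm{bad}}$ resimulates $\mathcal{A}'(y,\alpha)$ on $\tilde f$ with the same measurement bits to reproduce $\tilde x_y$ exactly; it sets $\hat f(\tilde x_y)=y$ and fills in $f^{-1}|_{Y_{\mathrm{bad}}}$ for the rest. Thus $\hat f=f$ whenever the encoder does not emit $\mathtt{fail}$.

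\textbf{Analysis and main obstacle.} Applying Lemma~\ref{swap-lemma} to $\mathcal{A}'$ on oracles $f$ versus $\tilde f$ (which differ only on $I$) gives, per $y$, total variation distance at most $4\sqrt{T'\sum_{j\in I}q_j(f,y)}$. Averaging over the random $I\subseteq f^{-1}(G_f)$ and applying Jensen's inequality yields $\mathbb{E}_I[\mathrm{TVD}]=O(\sqrt{(T')^2 k/(\eps N)})=O(\sqrt c)$, which is below $1/12$ for $c$ small. Since $I\subseteq f^{-1}(G_f)$, every $y\in f(I)$ has base-run success $\ge 2/3-\mathrm{TVD}\ge 7/12$ on $\tilde f$; a $B$-fold majority then succeeds per $y$ with probability $\ge 1-0.02$ for $B$ a suitable constant, so $\mathbb{E}|Y_{\mathrm{bad}}|\le 0.02k$ and Markov gives $\Pr[|Y_{\mathrm{bad}}|\le k/10]\ge 0.8$. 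The non-$\mathtt{fail}$ encoding has length $S+\log(N!/k!)+O(k)$ bits (the $O(k)$ absorbing the cost of $Y_{\mathrm{bad}}$ and $f^{-1}|_{Y_{\mathrm{bad}}}$ when $|Y_{\mathrm{bad}}|\le k/10$, using $\log\binom{k}{|Y_{\mathrm{bad}}|}+\log(k!/(k-|Y_{\mathrm{bad}}|)!)$ bits), so $\log|Y|\le \log N!-\log k!+S+O(k)$; by Stirling $\log k!=\Omega(k\log k)=\Omega(\eps N/T^2)$ for $k$ above a small constant, which yields the desired bound (any small-$k$ slack is absorbed into the additive $O(\log N)$). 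The principal technical difficulty is the TVD averaging: Lemma~\ref{swap-lemma} produces only a per-$(I,y)$ square-root estimate, so obtaining the uniform $O(\sqrt c)$ bound requires careful double-averaging combined with Jensen, and crucially relies on the restriction $I\subseteq f^{-1}(G_f)$ — without this, a constant fraction of $y$'s in $f(I)$ could lie outside $G_f$ and contribute irremediably to $Y_{\mathrm{bad}}$.
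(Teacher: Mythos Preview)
There are two substantive gaps in your argument.

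\textbf{The set $I$ cannot be shared randomness.} You sample $I$ uniformly from the size-$k$ subsets of $f^{-1}(G_f)$, but $f^{-1}(G_f)$ depends on $f$; the decoder holds only $r$ and the encoding, not $f$, so it cannot reproduce $I$ from $r$. You acknowledge this (``the decoder will recover $I$ from the encoding''), yet your bit count for $f|_{[N]\setminus I}$ is $\log(N!/k!)$, which is the correct count only when the domain $[N]\setminus I$ is already known to the decoder. Encoding the domain costs an additional $\log\binom{N}{k}$ bits, and once you add this term the net saving $\log k! - \log\binom{N}{k}$ is negative whenever $k=o(\sqrt N)$ (roughly, $\log k!\approx k\log k$ while $\log\binom{N}{k}\approx k\log(N/k)$), so the claimed bound on $\log|Y|$ fails over most of the parameter range. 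The paper avoids this by drawing the random set $R$ from $[N]$ independently of $f$; then $R$ really is shared, the encoder transmits $f(R)$ and $f(G)\subseteq f(R)$ instead, and the bit counts telescope cleanly to $\log N! - \log|G|!$ with no hidden $\binom{N}{k}$ term.

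\textbf{The canonical extension does not control $q_x$.} For $y\in f(I)$ the preimage $x=f^{-1}(y)$ lies in $I$, and under your lex-smallest completion $\tilde f$ one generically has $\tilde f(x)\ne f(x)$, so $x\in\Delta(f,\tilde f)$ and the term $q_x(f,y)$ appears in the swapping-lemma sum. But $q_x(f,f(x))$ is not controlled by your averaging over $I$: it is a fixed quantity for each $x$ and may be as large as $T'$ (consider an algorithm that, after determining $x$, simply queries it repeatedly). Hence $\mathbb{E}\big[\sum_{j\in I}q_j(f,y)\big]$ need not be $O(T'k/(\eps N))$, and your claimed bound $\mathbb{E}_I[\mathrm{TVD}]=O(\sqrt c)$ does not follow. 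The paper's fix is to use, for each $y$ separately, an oracle $h$ that sets $h(z)=y$ for every $z$ in the random set, so that $h(x)=f(x)=y$ automatically and the sum in Lemma~\ref{swap-lemma} runs only over $z\ne x$; this is precisely why the paper's definition of a ``good'' element includes the per-element bound $\sum_{z\in R\setminus\{x\}}q_z(x)\le c/T$ rather than relying on an average.
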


\begin{proof}
First we describe the construction of $E(f,r)$.
Fix a permutation $f\in X$.
Let $I$ be the set of elements $x\in[N]$ such that $\cal A$ inverts $f(x)$.  By our definition of $X$,
\begin{equation}
\label{condition}
 |I| \ge \eps' N.
\end{equation}
Let $q_z(x)$ denote the total query magnitude of $z \in [N]$ when the algorithm $\mathcal{A}$ is executed with advice $\alpha = \alpha(f)$, oracle access to $f$, and input $y=f(x)$.

Randomness of the encoding subroutine is given by a random subset $R \subseteq [N]$ with each element of $[N]$ independently chosen to be in $R$ with probability $\delta/T^2$, where $0 < \delta < 1$ is some constant to be specified later.

For each $x\in I$, consider the following two random events:
\begin{equation}
\label{AandB}
(A)\quad x \in R \qquad\quad\mbox{and}\qquad\quad (B)\quad \sum_{z \in R\setminus\{x\}}q_{z}(x) \le \frac cT,
\end{equation}
where $c$ is some constant to be specified later.
Note that events (A) and (B) are independent, since event (A) depends on whether $x \in R$ and event (B) depends on whether $z \in R$, where $z \ne x$.
We say that an element $x\in I$ is {\em good} if it satisfies both (A) and (B).
Let $G$ denote the set of good elements.

\begin{clm}
With probability at least 0.8 over the choice of $R$, we have $|G| = \Omega(\eps N/T^2)$. 
\end{clm}

\begin{proof}
Let $H = R\cap I$.  This is just a binomial distribution.
The expected value of $|H|$ is ${|I|\delta}/{T^2}$.  
By the multiplicative Chernoff bound and~\eqref{T2}:
\begin{equation} \label{H}
\underset{R}{\mathbb{P}}\left[|H| \ge \frac{|I|\delta}{2T^2}\right] 
\ge 1-\exp\Bigl(-\frac{|I|\delta}{8T^2}\Bigr) 
\ge 1-\exp\left( - \frac{\delta}{16C} \right)
\ge 0.9,
\end{equation}
if $C$ is small enough.
Next, by linearity of expectation and by \eqref{ubqm},
\[
\underset{R}{\mathbb{E}}\left[\sum_{z \in R\setminus\{x\}}q_{z}(x)\right]
= \sum_{z \in [N]\setminus\{x\}}\frac{\delta}{T^2}\; q_{z}(x) \le \frac{\delta}{T^2} T = \frac \delta T.
\]
Hence, by Markov's inequality,
\begin{equation}\label{eventB}
\underset{R}{\mathbb{P}}\left[\sum_{z \in R\setminus\{x\}}q_{z}(x) \ge \frac cT\right] \le \frac Tc\cdot \frac\delta T  = \frac \delta c\;.
\end{equation}
Let $J$ denote the subset of $x\in I$ that satisfy (A) from~\eqref{AandB} and do not satisfy (B).
As events (A) and (B) are independent, the probability that $x\in I$ satisfies $x\in J$ is at most $\delta^2/(cT^2)$.  Hence, by Markov's inequality,
\begin{equation} \label{J}
\underset{R}{\mathbb{P}}\left[|J| \le \frac{10|I|\delta^2}{c T^2}\right] \ge 0.9.
\end{equation}
From~\eqref{H} and~\eqref{J}, we get that, with probability at least 0.8,
\[
|G| = |H| - |J| \ge \frac{|I|\delta}{2T^2} - \frac{10|I|\delta^2}{c T^2} 
\ge \frac{\eps'N}{T^2} \left( \frac\delta2 - \frac{10\delta^2}{c} \right) 
= \Omega\left(\frac{\eps N}{T^2} \right),
\]
if $\delta$ is a small enough positive constant.
\end{proof}

We now describe an encoding assuming $|G| = \Omega(\eps N/T^2)$. 
We assume that the encoding procedure fails if $|G|$ is smaller, hence, its success probability is $0.8$.
Before proceeding, note that if we want to encode a set $Q \subseteq [N]$ containing $k$ elements, it takes $\log\binom{N}{k}$ bits to specify it since for each of the $\binom{N}{k}$ subsets of $[N]$ of size $k$ we can assign a binary string to them, thereby specifying $Q$. Similarly, for any permutation $g$ on $N - u$ elements (where $0 \le u \le N-1$), we can specify it using $\log(N - u)!$ bits. Our encoding contains the following information:
\begin{itemize}
\item The advice string $\alpha$;
\item The cardinality of the set $G$ of good elements;
\item The set $f(R)$, encoded using $\log\binom{N}{|R|}$ bits;
\item The values of $f$ restricted to $f: [N]\setminus R \to [N]\setminus f(R)$, encoded using $\log(N - |R|)!$ bits;\footnote{That is, this part of the encoding is a permutation $g : [N - |R|]\to [N - |R|]$, with the meaning that if $g(i) = j$, then $f$ maps the $i$-th element of the set $[N]\setminus R$ to the $j$-th element of the set $[N] \setminus f(R)$. Note that knowledge of the sets $R$ and $f(R)$ is needed to decode this part of the encoding. This will not be a problem because the decoder knows $R$, which is part of the common random string, and is given $f(R)$.}
\item The set $f(G)$ of images of good elements of $R$, encoded using $\log\binom{|R|}{|G|}$ bits;
\item The values of $f$ restricted to $f: R\setminus G \to f(R\setminus G)$, encoded using $\log(|R|-|G|)!$ bits.\footnote{Similar remarks hold as made in the previous footnote. The decoder needs to know the sets $R \setminus G$ and $f(R\setminus G)$ to decode this part of the encoding. Although we have not explicitly specified the set $G$, the decoder will reconstruct $G$ from the encoding, as described in the decoding procedure.}
\end{itemize}

The decoding procedure is as follows.
It initializes an empty table to store the values of $f$, and it fills up the mapping from $[N] \setminus R$ to $[N]\setminus f(R)$.  It can be done, as the decoding procedure knows both $R$ (as the randomness of the encoder), $f(R)$ and the mapping (the latter two from the encoding).
Next, it inverts all $y\in f(G)$ in the way we are about to describe.
In particular, it determines the set $G$.
After that, it decodes the mapping from $R\setminus G$ to $f(R\setminus G)$.

Thus, it suffices to describe how to invert $y\in f(G)$.
Let $x=f^{-1}(y)$.  The decoder wants to find $x$.
Consider the oracle $h$ given by 
\begin{equation*}
h(z) =
\begin{cases}f(z), &\mbox{  if $z \in [N]\setminus R$;}\\ y, &\mbox{ if $z \in R$.}
\end{cases}
\end{equation*}
The decoder has enough information to simulate oracle $h$.
Also, $h(z) = f(z)$ for all $z\in ([N]\setminus R)\cup \{x\}$.
Now let $\mathcal{A}^{f}(y)$ denote $\mathcal{A}$ with advice $\alpha$, oracle access to $f$, and input $y$; and let $\mathcal{A}^{h}(y)$ denote $\mathcal{A}$ with advice $\alpha$, oracle access to $h$, and input $y$. 
Let $\ket{\phi_f}$ and $\ket{\phi_h}$ denote that final states of $\mathcal{A}^{f}(y)$ and $\mathcal{A}^{h}(y)$, respectively. 
Then by Lemma~\ref{swap-lemma} and the definition of a good element,
\begin{equation*}
\|\ket{\phi_f}-\ket{\phi_h}\| \le \sqrt{T\sum\nolimits_{z \in R\setminus \{x\}}q_{z}(x)}
\le \sqrt{T\cdot \frac{c}{T}} = \sqrt{c}.
\end{equation*}
As $x\in I$, measuring $\ket{\phi_f}$ gives $x$ with probability at least $2/3$.  
If $c$ is small enough, measuring $\ket{\phi_h}$ gives $x$ with probability strictly greater than $1/2$.
Thus, the decoding procedure can determine $x$ with certainty by deterministically simulating the quantum procedure ${\cal A}^h(y)$.

Finally, the length of our encoding is
\begin{equation*}
\begin{split}
&S + \log\left(\frac{N!}{(N-|R|)!|R|!}\cdot (N-|R|)!\cdot\frac{|R|!}{(|R|-|G|)!|G|!}\cdot (R-|G|)!\right) + O(\log N)\\
&= S + \log N! - \log|G|! + O(\log N).\qedhere
\end{split}
\end{equation*}
\end{proof}

From Lemmas~\ref{l1} and~\ref{l2}, we get that
\[
\log \left(\frac{\eps N!}4 \right) \le \log N! - \Omega\left(\frac{\eps N}{T^2}\right) + S + O(\log N),
\]
which easily implies that $T^2\cdot S = \tilde\Omega(\eps N)$.

\subsection{Quantum upper bound?}
Having established the lower bound in Theorem~\ref{result2}, the question then arises: can we provide a quantum algorithm such that for every choice of parameters $S, T$ such that $T^2\cdot S = N$, there is a data structure of size $\tilde{O}(S)$ and an algorithm that with the help of the data structure, given $f(x)$ for any $x \in [N]$, can invert $f$, which is a permutation, in time $\tilde{O}(T)$?\newline
\indent Unlike in the box problem where we could change the advice to be different than Yao's advice of parity, it seems unlikely that any classical advice other than iterates of $f$ would be useful for solving this problem. Moreover, the bottleneck seems to be the speed at which one can iterate $f$. However, by the results of Ozhigov \cite{ozhigov} and Farhi et al. \cite[pg. 5444]{farhi}, we have that function iteration cannot be sped up by a quantum computer (namely, in order to compute $f^{(s)}(x)$ we cannot do better than $O(s)$ applications of $f$).\newline
\indent It might be possible to obtain a speedup by using Grover's algorithm to find a starting point for the function iteration. For the sake of simplicity, suppose $f$ is a cyclic permutation. Given a $y \in [N]$ that we want to invert, one idea would be to have the advice be $N^{0.5-\eps}$ ``equally spaced'' points so that each interval between the given points is of size $N^{0.5 + \eps}$, for some fixed $\eps > 0$. \newline
\indent Then the problem of inverting $y$ essentially reduces to the problem of finding the closest advice point to $y$. Now, if we are able to find some $y'$ that lies halfway between the two advice points that $y$ lies in between of, then we can recurse on an interval that is half as long to find the preimage of $y$. To find this intermediate point $y'$, we can employ randomness. If we pick $N^{0.5-\eps}$ random points, then on average, one of these points will lie in the interval that $y$ is in. Now, if it was an efficiently computable property that this point lies halfway between the two advice points that $y$ lies in between of, then we could find the right point among the randomly chosen points in time $O\left(\sqrt{N^{0.5-\eps}}\right)$ using Grover's algorithm. \newline
\indent However, checking that this property holds seems to inevitably involve iterating $f$, making it unlikely that it is efficiently computable. But suppose even that this property is efficiently computable, then the running time grows linearly with the number of iterations of $f$ one needs to perform, but only in the square root of the number of points. As a result, if we pick $k < N$ points at random, then it seems that we would still need around $N/k$ iterations, so the running time would be approximately $O(\sqrt{k}\cdot (N/k)) = O(N/\sqrt{k})$, which will always be worse than $O\left(N^{0.5}\right)$.

\section{Conclusions and further work}
We originally proved the result of Theorem~\ref{result1} in order to shed light on how to eventually prove Theorem~\ref{result2}. We turned to the hybrid method, as it was enough to prove Theorem~\ref{result1}. Moreover, it turned out to also be an applicable lower bound technique to proving Theorem~\ref{result2}. Now, the advice we considered here was classical, as that was the same advice used by De et al. \cite{de} and seemed most intuitive to reason about. However, advice given to a quantum algorithm can also be a quantum state. The difficulty here is we cannot copy quantum advice (due to the ``no cloning'' theorem of Wootters and Zurek \cite{no-clone}). Therefore, we cannot adapt our current proof for permutation inversion given in \S 4.2 directly because in Lemma~\ref{l2} we have to get the elements of $G$ repeatedly with the same piece of advice.\newline
\indent Another direction would be to prove or disprove the optimality of the lower bounds obtained in either Theorem~\ref{result1} or Theorem~\ref{result2}. On the lower bounds side, perhaps our results can be improved by some variant of the standard lower bound techniques (since as discussed in \S 1, the techniques of Nishimura and Yamakami \cite{NY} cannot be directly applied to the problem of permutation inversion). \newline
\indent The best technique might be the adversary method (as this improves on the hybrid method). However, the adversary method is difficult to apply since, even in the simpler case of the box problem, the set $D_{\alpha}$ is arbitrary, so constructing a binary relation $R$ on the sets of inputs on which $f_{j,\alpha}$ differs, as required to apply the adversary argument by Theorem 6.1 of \cite{ambainis}, is not obvious. \newline
\indent Another line of attack would be to show that either the box problem reduces to solving parity, or that inverting a random permutation with advice reduces to solving function iteration, both of which are difficult for quantum computers. This would provide strong evidence for quantum computers not being able to solve either of these problems better than in the classical case. On the algorithms side, it would be interesting to see if one can improve the quantum upper bound for the box and permutation inversion problems with classical advice. If there is a quantum speedup, most likely it will be a novel algorithm, since simply using Grover's search algorithm \cite{grover} as a subroutine (or with randomness) does not seem to help. We suspect that with classical advice one cannot do better than the classical upper bounds (given the discussions in \S 3.4 and \S 4.3).
\newpage
\section*{Acknowledgements}
A. N. would like to thank Mark Zhandry for several helpful discussions. S. A. is supported by an Alan T. Waterman Award from the National Science Foundation. A. B. is supported by FP7 FET Proactive project QALGO. Part of this work was done while A. B. was at CSAIL, Massachusetts Institute of Technology, USA, supported by Scott Aaronson's Alan T. Waterman Award from the National Science Foundation. L. T. is supported by the National Science Foundation under Grant No. 1216642 and by the US-Israel Binational Science Foundation under Grant No. 2010451.

\bibliographystyle{amsplain}

\begin{thebibliography}{13}
\bibitem[1]{ambainis}
A. Ambainis. ``Quantum lower bounds by quantum arguments''. \emph{Journal of Computer and System Sciences} \textbf{64} (2002): 750-767. \href{http://arxiv.org/abs/quant-ph/0002066}{arXiv: quant-ph/0002066}.
\bibitem[2]{poly}
R. Beals, H. Buhrman, R. Cleve, M. Mosca, and R. de Wolf. ``Quantum lower bounds by polynomials''. \emph{Proceedings of FOCS '98} (1998): 342-351. \href{http://arxiv.org/abs/quant-ph/9802049}{arXiv: quant-ph/9802049}.
\bibitem[3]{bbbv}
C. H. Bennett, E. Bernstein, G. Brassard, and U. Vazirani. ``Strengths and weaknesses of quantum computing''. \emph{SIAM Journal on Computing} \textbf{26} (1997): 1510-1523. \href{http://arxiv.org/abs/quant-ph/9701001}{arXiv: quant-ph/9701001}.
\bibitem[4]{be}
E. Bernstein and U. Vazirani. ``Quantum complexity theory''. \emph{Siam Journal on Computing} \textbf{26} (1997): 1411-1473.
\bibitem[5]{brassard}
G. Brassard, P. H\o yer, and A. Tapp. ``Quantum counting''. In \emph{Proceedings of 25th International
Colloquium on Automata Languages and Programming}. Lecture Notes in Computer Science, Springer-Verlag, New York, \textbf{1443} (1998): 820-831. \href{http://arxiv.org/abs/quant-ph/9805082}{arXiv: quant-ph/9805082}.
\bibitem[6]{de}
A. De, L. Trevisan, and M. Tulsiani. ``Non-uniform attacks against one-way functions and PRGs''. \emph{Advances in Cryptology - CRYPTO 2010. Lecture Notes in Computer Science} \textbf{6223} (2010): 649-665. Full version available at (this is the version we use): \emph{Electronic Colloquium on Computational Complexity} Report No. 113 (2009). \url{http://eccc.hpi-web.de/report/2009/113/}.
\bibitem[7]{farhi}
E. Farhi, J. Goldstone, S. Gutmann, M. Sipser. ``A limit on the speed of quantum computation in determining parity''. \emph{Physical Review Letters} \textbf{81} (1998): 5442 - 5444. \href{http://arxiv.org/abs/quant-ph/9802045}{arXiv: quant-ph/9802045}.
\bibitem[8]{grover}
L. K. Grover. ``A fast quantum mechanical algorithm for database search''. \emph{Proceedings of the 28th Annual ACM Symposium on the Theory of Computing} 1996: 212-219.
\bibitem[9]{hellman}
M. Hellman. ``A cryptanalytic time-memory trade-off''. \emph{IEEE Transactions on Information Theory} \textbf{26} (1980): 401-406.
\bibitem[10]{NY}
H. Nishimura and T. Yamakami. ``An algorithmic argument for nonadaptive query complexity lower bounds on advised quantum computation''. Extended abstract in \emph{Proceedings of the 29th International Symposium on Mathematical Foundations of Computer Science}, Lecture Notes in Computer Science, Springer-Verlag, Prague, August 22-27, 2004. Full version at: \href{http://arxiv.org/abs/quant-ph/0312003}{arXiv: quant-ph/0312003}.
\bibitem[11]{ozhigov}
Y. Ozhigov. ``Quantum computer can not speed up iterated applications of a black box''. \emph{QCQC 1998 Selected papers from the First NASA International Conference on Quantum Computing and Quantum Communications}: 152-159. \href{http://arxiv.org/abs/quant-ph/9712051}{arXiv: quant-ph/9712051}.
\bibitem[12]{vazirani}
U. Vazirani. ``On the power of quantum computation''. \emph{Philosophical Transactions of the Royal Society of London} A \textbf{365} (1998): 1759-1768.
\bibitem[13]{qcnotes}
R. de Wolf. ``Quantum Computing: Lecture Notes''. \url{http://homepages.cwi.nl/~rdewolf/qcnotes.pdf}. 2013.
\bibitem[14]{no-clone}
W. K. Wootters and W. H. Zurek. ``A single quantum cannot be cloned''. \emph{Nature} \textbf{299} (1982): 802-803.
\bibitem[15]{pc}
T. Yamakami. \emph{Personal communication}. February 2014.
\bibitem[16]{yao}
A. Yao. ``Coherent functions and program checkers (extended abstract)''. \emph{Proceedings of the 22nd ACM Symposium on Theory of Computing} (1990): 84-94.
\end{thebibliography}

\end{document}